\definecolor{darkgreen}{rgb}{0.0,0,0.9}
\newcommand{\LL}{\mathbf{L}}
\newcommand{\PP}{\mathbf{P}}
\newcommand{\HH}{\mathbf{H}}
\newcommand{\bHH}{\bar{\mathbf{H}}}
\newcommand{\pp}{\mathbf{p}}
\newcommand{\AAA}{\mathbf{A}}
\newcommand{\uu}{\mathbf{u}}
\newcommand{\vv}{\mathbf{v}}
\newcommand{\xx}{\mathbf{x}}
\newcommand{\VV}{\mathbf{V}}
\newcommand{\Vcal}{\mathcal{V}}
\newcommand{\Ecal}{\mathcal{E}}
\newcommand{\GG}{\mathcal{G}}
\newcommand{\UU}{\mathbf{U}}
\newcommand{\aaa}{\mathbf{a}}
\newcommand{\Exp}{\mathbb{E}}
\newcommand{\Rset}{\mathbb{R}}
\DeclareSymbolFont{sfoperators}{OT1}{cmss}{m}{n}
\DeclareSymbolFontAlphabet{\mathsf}{sfoperators}
\def\operator@font{\mathgroup\symsfoperators}
\DeclareMathOperator{\Cov}{Cov}
\DeclareMathOperator{\rankk}{rank}
\DeclareMathOperator{\bern}{Bern}
\DeclareMathOperator{\diag}{diag}
\newtheorem{theorem}{Theorem}[]
\newtheorem{lemma}{Lemma}[]
\newtheorem{remark}{Remark}[]
\author{Kasra Khosoussi\\[0.16cm]\href{http://kasra.github.io}{kasra.github.io}\\\texttt{kasra.khosoussi@uts.edu.au}}
\date{September 20, 2018}
\title{On the Expected Value of the Determinant of Random Sum of Rank-One Matrices}
\begin{document}
\maketitle
%\tableofcontents
\begin{abstract}
  We present a simple, yet useful result about the expected value of the determinant
  of random sum of rank-one matrices. Computing such expectations in general
  may involve a sum over exponentially many terms. Nevertheless, we show that an
  interesting and useful class of such expectations that arise in, e.g.,
  D-optimal estimation and random graphs can be computed efficiently via
  computing a single determinant.
\end{abstract}
\section{Problem Definition}
\begin{itemize}
  \item $[n] \triangleq \{1,2,\dots,n\}$, and for any finite set $\mathcal{W}$,
	$\binom{\mathcal{W}}{k}$ is the set of $k$-subsets of $\mathcal{W}$.
  \item Suppose we are given a pair of $m$ real $n$-vectors,
	$\{\uu_i\}_{i=1}^{m}$ and $\{\vv_i\}_{i=1}^m$. Define,
%	Suppose $\UU \triangleq [\uu_1 \uu_2 \dots \uu_m]$ where $\uu_i\in \Rset^{n \times r}$ for $i \in [m]$ such that $m \geq n
%	\geq r$ for all $i \in [m]$. Also, suppose $\rankk(\AAA_i) = r$ for
%	$i \in [m]$. 
	\begin{equation}
	  \UU \triangleq \big[
		\uu_1 \,\, \uu_2 \,\, \cdots \,\, \uu_m
	  \big] \qquad
	  \VV \triangleq
		\big[
		\vv_1 \,\, \vv_2 \,\, \cdots \,\, \vv_m
	  \big]
	  \label{}
	\end{equation}
  \item Let $\{\pi_i\}_{i=1}^{m}$ be $m$ \emph{independent} Bernoulli random
	variables distributed as,
	\begin{align}
	  \pi_i &\sim \bern(p_i) \quad\quad \text{$i \in [m]$} \\
	  \pi_i & \perp \pi_j \quad\quad\qquad\,\,  i,j\in[m], i\neq j
	  \label{}
	\end{align}
	where $\{p_i\}_{i=1}^{m}$ are given. Define $\pp \triangleq [p_1 \,\, p_2
	\,\, \cdots \,\, p_m]^\top$ and $\boldsymbol\pi \triangleq 
	  [\pi_1 \, \pi_2 \, \cdots \, \pi_m]^\top.
	$
%  \item Define $\mathcal{A}_\pi \triangleq \{(\AAA_i,\pi_i)\}_{i\in[m]}$.
%  Consider the following expectation,
  \item We are interested in computing the expression below,
	\begin{align}
	 % e(\mathcal{A}_\pi) \triangleq 
	  e(\UU,\VV,\pp) & \triangleq \Exp_{\boldsymbol\pi}\,\Big[\det\Big(\sum_{i=1}^{m} \pi_i \uu_i\vv_i^\top
	  \Big)\Big] \\ &= 
	  \Exp_{\boldsymbol\pi}\,\Big[\det\Big(\UU \Pi \VV^\top \Big)\Big]
	  \label{}
	\end{align}
	where $\Pi \triangleq \diag(\pi_1,\pi_2,\dots,\pi_m)$. 
	Note that the naive way of computing this expectation leads to a
	computationally	intractable sum over
	$\{0,1\}^{m}$.
\section{Main Result}
\begin{theorem}[Main Result \cite{kasraArxiv16}]
	\begin{align}
	 % e(\mathcal{A}_\pi) \triangleq 
	  e(\UU,\VV,\pp) & = \det\Big(\sum_{i=1}^{m} p_i \uu_i\vv_i^\top
	  \Big) \\ &= 
	  \det\Big(\UU \PP \VV^\top \Big),
	  \label{}
	\end{align}
	where $\PP \triangleq \diag(p_1,p_2,\dots,p_m)$.
	  \label{th:main}
	\end{theorem}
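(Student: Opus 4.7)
The plan is to use the Cauchy--Binet formula to convert the determinant of the $n \times n$ matrix $\UU \Pi \VV^\top$ (where $\UU, \VV$ are $n \times m$ with $m \geq n$) into a sum over $n$-subsets of columns, where the randomness appears only through a clean product of Bernoullis. Then expectation and determinant can be swapped essentially term-by-term, and Cauchy--Binet is applied a second time in reverse to collapse the sum back into a single determinant.

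More concretely, I would first write
\begin{equation*}
  \det\bigl(\UU\,\Pi\,\VV^\top\bigr) \;=\; \sum_{S \in \binom{[m]}{n}} \det(\UU_S)\,\det(\VV_S)\,\prod_{i \in S} \pi_i,
\end{equation*}
where $\UU_S$ and $\VV_S$ denote the $n \times n$ submatrices obtained by selecting the columns indexed by $S$. This follows from Cauchy--Binet together with the fact that $\Pi$ is diagonal, so the contribution of the middle factor to any size-$n$ minor is precisely $\prod_{i \in S}\pi_i$. Taking expectation and using independence together with $\Exp[\pi_i] = p_i$ for Bernoulli random variables, I obtain
\begin{equation*}
  e(\UU,\VV,\pp) \;=\; \sum_{S \in \binom{[m]}{n}} \det(\UU_S)\,\det(\VV_S)\,\prod_{i \in S} p_i.
\end{equation*}

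The final step is to recognize the right-hand side as the Cauchy--Binet expansion of $\det(\UU\,\PP\,\VV^\top)$: since $\PP$ is diagonal, the $S$-minor of $\PP$ is exactly $\prod_{i \in S} p_i$, so the same formula reassembles into a single determinant, yielding $e(\UU,\VV,\pp) = \det(\UU\,\PP\,\VV^\top)$.

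I do not expect any serious obstacle; the proof is essentially a two-line application of Cauchy--Binet sandwiching the linearity-of-expectation step. The only minor subtleties worth flagging are (i) verifying the boundary cases where $m < n$ (in which case both sides vanish, since $\UU\Pi\VV^\top$ has rank less than $n$ for every realization of $\Pi$, and the Cauchy--Binet sum is empty), and (ii) noting that the argument uses independence of the $\pi_i$ only through the factorization $\Exp[\prod_{i \in S}\pi_i] = \prod_{i \in S}p_i$, so pairwise independence of products over subsets of size up to $n$ would suffice --- a remark that may be worth making after the proof.
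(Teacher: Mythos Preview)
Your proposal is correct and follows essentially the same three-step route as the paper: apply Cauchy--Binet to expand the random determinant over $n$-subsets, take expectations term by term using independence, and then recognize the resulting sum as the Cauchy--Binet expansion of $\det(\UU\PP\VV^\top)$. The only cosmetic difference is that you factor each $\mathcal{Q}$-term directly as $\det(\UU_\mathcal{Q})\det(\VV_\mathcal{Q})\prod_{i\in\mathcal{Q}}\pi_i$, whereas the paper reaches the same conclusion via a brief rank argument (the sum of $n$ rank-one terms has rank $<n$ unless every $\pi_k=1$); these are equivalent observations.
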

\begin{proof}[Proof of Theorem 1] The proof outline is as follows:
  \begin{enumerate}
	\item[Step 1.] First, the Cauchy-Binet formula is used to expand the determinant
	  as a sum over $\binom{m}{n}$ terms.
	\item[Step 2.] The expected value of each of the $\binom{m}{n}$ terms can be easily computed. 
	\item[Step 3.] Finally, the Cauchy-Binet formula
	  is applied again to shrink the sum.
  \end{enumerate}
  We begin by applying the Cauchy-Binet formula:
\begin{align}
  \Exp_{\boldsymbol\pi}\,\Big[\det\Big(\sum_{i=1}^{m} \pi_i \uu_i\vv_i^\top \Big)\Big]
  &=
  \Exp_{\boldsymbol\pi}\,\Big[\sum_{\mathcal{Q} \in
	\binom{[m]}{n}}\det\Big(\sum_{k \in
	\mathcal{Q}} \pi_k \uu_k\vv_k^\top \Big)\Big] \label{eq:cb} \\
  &=
  \sum_{\mathcal{Q} \in
  \binom{[m]}{n}}\Exp_{\boldsymbol\pi}\,\Big[\det\Big(\sum_{k \in
  \mathcal{Q}} \pi_k \uu_k\vv_k^\top \Big)\Big].
  \label{eq:final}
\end{align}
Since $|\mathcal{Q}| = n$,  we will have $\rankk\Big(\sum_{k \in
  \mathcal{Q}} \pi_k \uu_k\vv_k^\top\Big) < n$ if there exists $k \in
  \mathcal{Q}$ for which $\pi_k = 0$.
%\begin{align}
%  \rankk\Big(\sum_{k \in
%	\mathcal{Q}} \pi_k \uu_k\vv_k^\top\Big) = 
%	\begin{cases}
%	  n & \text{iff $\pi_k = 1$ for all $k \in \mathcal{Q}$,} \\
%	  \gamma < n & \text{otherwise.}
%	\end{cases}
%  \end{align}
  Hence, the determinant can be non-zero only when 
  $\pi_k = 1$ for all $k \in \mathcal{Q}$. Therefore,
\begin{align}
  \det\Big(\sum_{k \in
	\mathcal{Q}} \pi_k \uu_k\vv_k^\top\Big) = 
	\begin{cases}
	\det\Big(\sum_{k \in
	\mathcal{Q}} \uu_k\vv_k^\top\Big)   & \text{iff $\pi_k = 1$ for all $k \in \mathcal{Q}$,} \\
	0 & \text{otherwise.}
	\end{cases}
  \end{align}
  But from the independence assumption we know that,
\begin{equation}
  \mathbb{P}\,\Big[\bigwedge_{k \in \mathcal{Q}} \pi_k = 1\Big] =
  \prod_{k \in \mathcal{Q}} p_k.
  \label{}
\end{equation}
Each individual expectation in \eqref{eq:final} can be computed as follows.
\begin{align}
  \Exp_{\boldsymbol\pi}\,\Big[\det\Big(\sum_{k \in
  \mathcal{Q}} \pi_k \uu_k\vv_k^\top \Big)\Big] 
  & =  \det\Big(\sum_{k \in
  \mathcal{Q}} \uu_k\vv_k^\top \Big) 
  \,\prod_{k \in \mathcal{Q}} p_k \\
  & = 
  \det\Big(\sum_{k \in
  \mathcal{Q}} p_k \uu_k\vv_k^\top \Big).
  \label{eq:simplified}
\end{align}
Plugging \eqref{eq:simplified} back into \eqref{eq:final} yields,
\begin{align}
  \Exp_{\boldsymbol\pi}\,\Big[\det\Big(\sum_{i=1}^{m} \pi_i \uu_i\vv_i^\top \Big)\Big]
  &= 
  \sum_{\mathcal{Q}\in\binom{[m]}{n}}\det\Big(\sum_{k \in
  \mathcal{Q}} p_k \uu_k\vv_k^\top \Big).
  \label{eq:almost}
\end{align}
Note that \eqref{eq:almost} is
nothing but the Cauchy-Binet expansion of $\det\Big(\sum_{i=1}^m p_i \uu_i\vv_i^\top
\Big)$. This concludes the proof.
\end{proof}
%
%	The naive way of computing $e(\mathcal{A}_\pi)$ requires computing a sum
%	over $2^m$
%	terms. In this document, we search for efficient ways for computing this
%	quantity.
\end{itemize}
%\pagebreak
\section{Motivation \& Applications}
$e(\UU,\VV,\pp)$ arises in the following problems:
\begin{enumerate}
  \item \textbf{Estimation}
	\\
	Suppose $\xx \in \Rset^n$ is an unknown quantity to be estimated
	using $m$ observations $\{z_i\}_{i=1}^m$ ($m \geq n$) generated according to,
	\begin{align}
	  \mathbf{z} = \mathbf{H}\mathbf{x} + \epsilon \quad \text{where} \quad
	  \epsilon \sim \mathcal{N}(\mathbf{0},\Sigma)
	  \label{}
	\end{align}
	where $\mathbf{z}\triangleq [z_1 \,\, z_2 \,\, \cdots \,\, z_m]^\top$.
%	and
%	$\Sigma \triangleq \diag(\sigma_1^2 \,\,\sigma_2^2 \,\, \cdots \,\,
%	\sigma_m^2)$.
	To simplify our notation, let us define  $\bar{\HH} \triangleq
	\Sigma^{-1/2}\HH$. The maximum likelihood estimator $\hat{\xx}$ has the
	following form:
	\begin{equation}
	  \hat{\xx} = {(\bHH^\top\bHH)}^{-1}\bHH^\top\mathbf{z}.
	\end{equation}
	It is well known that $\hat{\xx}$ is unbiased and \emph{efficient}; i.e., it
	achieves the Cram\'{e}-Rao lower bound,
	\begin{equation}
	  \Cov\,[\hat{\xx}] = {\big(\bHH^\top\bHH\big)}^{-1}.
	  \label{}
	\end{equation}
	
	Geometrically speaking, the hypervolume of uncertainty hyperellipsoids are
	proportional to $\sqrt{\det\Cov\,[\hat{\xx}]}$ (see, e.g.,
	\cite{joshi2009sensor}).
	The D-optimality (determinant-optimality) criterion is
	defined as $\det\Cov\,[\hat{\xx}]^{-1}$. 
	Note that $\det\Cov\,[\hat{\xx}] = {(\det\,\mathcal{F})}^{-1}$
	where $\mathcal{F} \triangleq \bHH^\top\bHH$ is the so-called Fisher
	information matrix. Hence, minimizing the determinant of the estimation
	error covariance matrix is
	equivalent to maximizing the D-optimality criterion, $\det\,(\bHH^\top\bHH)$.
	Now consider the following scenarios.
	\begin{enumerate}
	  \item \textbf{Sensor Failure}: The $i$th ``sensor'' may ``fail''
		independently with probability $1-p_i$, for all $i \in [m]$. In this case, the
		row corresponding to each failed sensor has to be removed from
		$\bHH$.
		Hence, $e(\bHH^\top\hspace{-0.09cm},\bHH,\pp)$ gives the \emph{expected
		value} of the D-optimality criterion.
  \item \textbf{Sensor Selection}: 
	The goal in D-optimal sensor selection is to select a subset (e.g.,
	$k$-subset) of $m$ available sensors (observations) such that the
	D-optimality criterion is maximized. 
	Joshi and Boyd \cite{joshi2009sensor} proposed an approximate solution to
	this problem through convex relaxation. In \cite{kasraArxiv16}, we showed
	that their convex program can be interpreted as the problem of finding the optimal
	probabilities $\{p_i\}_{i=1}^{m}$ for randomly selecting (e.g., $k$) sensors
	via independent coin tosses such that the expected value of the D-optimality
	criterion,
	i.e., $e(\bHH^\top\hspace{-0.09cm},\bHH,\pp)$, is maximized. See
	\cite{kasraArxiv16,kasra16wafr} for the details.
	\end{enumerate}
	\begin{remark} For sufficiently smooth nonlinear measurement models, $\bHH$ should be
	  replaced by the normalized Jacobian of the measurement function.
%	 Note that some of these cases can be generalized to sufficiently smooth
%	 (nonlinear) measurement models. In those cases, $\bHH$ should be replaced
%	 by the normalized Jacobian of the measurement function.
	\end{remark}
%		Define $\pi_i \sim \bern(p_i)$ for
%		$i \in [m]$ and $\Pi \triangleq \diag(\pi_1,\pi_2, \cdots, \pi_m)$. 
%		The expected value of the
%		D-optimality criterion can be expressed as,
%	\begin{align}
%	 % e(\mathcal{A}_\pi) \triangleq 
%	  e(\bHH^\top\hspace{-0.09cm},\bHH,\pp) = \Exp_{\boldsymbol\pi}\,\Big[\det\Big(\bHH^\top \Pi \,\, \bHH \Big)\Big].
%	  \label{}
%	\end{align}
	\begin{figure}[t]
	  \centering
		\begin{tikzpicture}
%	    \switch{}{
%	      \draw (0.5,1.5) ellipse (0.4cm and 0.2cm);
%	      \draw (-1.28,0.92) ellipse (0.3cm and 0.4cm);
%	      \draw (-1.27,-1) ellipse (0.2cm and 0.6cm);
%	      \draw (0.5,-1.53) ellipse (0.3cm and 0.2cm);
%	    }
%	    \switch{
%		  \draw[black!80!white,very thick,->,>=stealth'] (-2,-2) -- (2.5,-2) node(xline)[right]
%		  {$x$}; 
%		  \draw[black!80!white,very thick,->,>=stealth'] (-2,-2) -- (-2,2.5) node(yline)[above]
%		  {$y$};
%		  \draw[dashed,black!50!gray,step=1.1cm] (-2,-2) grid (2.2,2.2);
%	    }{
%	      \draw[white,very thick,->,>=stealth'] (-2,-2) -- (2.5,-2) node(xline)[right]
%	      {$x$}; 
%	      \draw[white,very thick,->,>=stealth'] (-2,-2) -- (-2,2.5) node(yline)[above]
%	      {$y$};
%	      \draw[dashed,white,step=1.1cm] (-2,-2) grid (2.2,2.2);
%	      \draw (0.5,1.5) ellipse (0.4cm and 0.2cm);
%	      \draw (-1.28,0.92) ellipse (0.3cm and 0.4cm);
%	      \draw (-1.27,-1) ellipse (0.2cm and 0.6cm);
%	      \draw (0.5,-1.53) ellipse (0.3cm and 0.2cm);
%	    }
		  \GraphInit[vstyle=Classic]
		  \SetUpVertex[FillColor = blue, MinSize=0.1cm]
		  \tikzset{VertexStyle/.append style = {
			font=\Large,
		  draw}}
		  \SetGraphUnit{2}
		  \SetVertexMath
		  \begin{scope}[]
			\Vertices{circle}{v_1,v_2,v_3,v_4,v_5}
		  \end{scope}
  %\Edges(1,2,2,3)
%  \SetUpEdge[color=red,style=dotted]
%  \Edge[color=blue](1)(4)
		  \SetUpEdge[lw = 1pt,
			color = black,
			labelcolor = white,
		  labelstyle = {draw,text=black}]
		  \Edge[label = $p_{1}$](v_1)(v_2)
		  \Edge[label = $p_{2}$](v_2)(v_3)
		  \Edge[label = $p_{3}$](v_3)(v_4)
		  \Edge[label = $p_{4}$](v_5)(v_4)
		  \Edge[label = $p_{6}$](v_5)(v_2)
		  \Edge[label = $p_{5}$](v_1)(v_5)
		\end{tikzpicture}
		\caption{A random edge-weighted graph $\GG_\pi$ with probabilities
		  $\{p_i\}_{i=1}^{6}$. The edge weights (not shown here) are
		  assigned by $w : \Ecal \to \Rset_{>0}$.}
		\label{fig:graph}
	\end{figure}
	\pagebreak
  \item \textbf{Spanning Trees in Random Graphs\footnote{We first presented Theorem~\ref{th:main}, and its special case used
for computing the weighted number of spanning trees, in
\cite{kasraArxiv16}. Recently
we discovered an earlier result for computing the expected number of spanning
trees in unweighted anisotropic random graphs by Joel E. Cohen in 1986
\cite{cohen1986connectivity}. Cohen in \cite{cohen1986connectivity} provides a
different proof and extends his result to the case of random directed graphs.
Our result, however, considers the weighted graphs, while our Theorem~\ref{th:main}
extends it to the general case of random sum of arbitrary rank-one matrices.}
	}
	\\
	Networks with ``reliable'' (against, e.g., noise in estimation, or failure
	in communication) topologies are crucial in many applications across science
	and engineering. In general, the notion of reliability in networks is
	closely related to graph connectivity.  Among the existing combinatorial and
	spectral graph connectivity criteria, the number of spanning trees
	(sometimes referred to
	as \emph{graph complexity} or \emph{tree-connectivity}) stands out: despite
	its combinatorial origin, it can also be characterized solely by the
	spectrum of the graph Laplacian (Kirchhoff) matrix. This result is due to
	Kirchhoff's matrix-tree theorem (and its extensions):
	%\pagebreak
	\begin{theorem}[Kirchhoff's Matrix-Tree Theorem for Weighted Graphs]
	  Consider graph $\GG = (\Vcal,\Ecal,w)$ where $\Vcal =
	  \{v_i\}_{i=0}^{n}$,
	  $\Ecal \subseteq \binom{\Vcal}{2}$, and $w : \Ecal \to \Rset_{>0}$.
	  The \emph{reduced Laplacian matrix of $\GG$}, denoted by $\LL_\GG$, is
	  obtained by removing an arbitrary row and the corresponding column from
	  the (weighted) Laplacian matrix of $\GG$; e.g., $v_0$.
	  The weighted number of spanning is given by,
	  \begin{align}
		t_w(\GG) &\triangleq \sum_{\mathcal{T} \in \mathbb{T}(\GG)}
		\prod_{e \in \Ecal(\mathcal{T})} w(e) \\
		& = \det\,(\LL_\GG)
		\label{}
	  \end{align}
	  where $\mathbb{T}(\GG)$ is the set of all spanning trees of $\GG$, and
	  $\Ecal(\mathcal{T})$ denotes the edge set of graph $\mathcal{T}$. 
	  Note that in case of unit weights, $t_w(\GG)$ is simply the number of
	  spanning trees in $\GG$.
	  \label{th:mt}
	\end{theorem}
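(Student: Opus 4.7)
My plan is to exploit the striking parallel between this statement and Theorem~\ref{th:main}: both reduce, via Cauchy--Binet, to identifying which $n$-subsets of a bank of rank-one matrices contribute a non-zero determinant. The first step is to rewrite the reduced weighted Laplacian as a weighted sum of rank-one matrices using the (signed) incidence representation. After fixing an arbitrary orientation of each edge, for every $e=\{v_i,v_j\}\in\Ecal$ I would let $\aaa_e\in\Rset^n$ denote the reduced signed incidence vector of $e$ (so that its $v_i$-entry is $+1$, its $v_j$-entry is $-1$, and the $v_0$-coordinate has been dropped together with everything else). A direct entry-wise check then yields $\LL_\GG=\sum_{e\in\Ecal}w(e)\,\aaa_e\aaa_e^\top$.

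With this rank-one expansion in hand, I would apply Cauchy--Binet to $\det(\LL_\GG)$ exactly as in the proof of Theorem~\ref{th:main}, obtaining a sum of $\binom{|\Ecal|}{n}$ determinants indexed by $n$-subsets $\mathcal{Q}\subseteq\Ecal$. Each summand cleanly factors as $\big(\prod_{e\in\mathcal{Q}}w(e)\big)\det(\BB_\mathcal{Q})^2$, where $\BB_\mathcal{Q}$ denotes the $n\times n$ column-submatrix of the reduced incidence matrix indexed by $\mathcal{Q}$; this factorization is immediate from the identity $\sum_{e\in\mathcal{Q}}w(e)\,\aaa_e\aaa_e^\top=\BB_\mathcal{Q}\,\diag(w(e))_{e\in\mathcal{Q}}\,\BB_\mathcal{Q}^\top$ together with multiplicativity of $\det$.

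The crux of the argument---and the only step that is genuinely graph-theoretic rather than linear-algebraic---is the classical unimodularity lemma: $\det(\BB_\mathcal{Q})$ equals $\pm 1$ when $\mathcal{Q}$ is the edge set of a spanning tree of $\GG$, and vanishes otherwise. I would prove this lemma by induction on $n$: if $\mathcal{Q}$ spans a tree, peeling off a leaf distinct from $v_0$ via Laplace expansion reduces the problem to a smaller tree; if $\mathcal{Q}$ is not a spanning tree then, by a counting argument, it must contain a cycle, and the alternating sum of columns around that cycle produces a linear dependence among the columns of $\BB_\mathcal{Q}$. This lemma is the main obstacle, but it is completely standard. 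Substituting its dichotomy into the Cauchy--Binet expansion collapses the sum to $\sum_{\mathcal{T}\in\mathbb{T}(\GG)}\prod_{e\in\Ecal(\mathcal{T})}w(e)=t_w(\GG)$, completing the proof.
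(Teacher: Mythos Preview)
Your argument is the standard Cauchy--Binet proof of the weighted Matrix--Tree Theorem and is correct as outlined: the factorization $\LL_\GG=\AAA\,\diag(w(e))\,\AAA^\top$, the Cauchy--Binet expansion into $n$-subsets of edges, the extraction of the weight product, and the $\{0,\pm 1\}$ dichotomy for minors of the reduced incidence matrix are all sound. The inductive leaf-peeling for spanning trees and the cycle-based linear dependence for non-trees are exactly the right ingredients for the unimodularity lemma.

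That said, the paper does \emph{not} supply its own proof of this statement. Theorem~\ref{th:mt} is invoked as a classical result and used as a tool in the application to random graphs; the only theorem proved in the paper is Theorem~\ref{th:main}. So there is nothing to compare your approach against here beyond noting that your proof is in the same Cauchy--Binet spirit the paper uses for Theorem~\ref{th:main}, and that it is precisely the textbook derivation of Kirchhoff's theorem via the incidence-matrix factorization.
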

	
	Now consider a random graph whose $i$th edge is ``operational'' with
	probability $p_i$, independent of other edges
	(Figure~\ref{fig:graph}).\footnote{Here, ``operational'' means that the
	corresponding vertices are connected via an edge.}
	Define indicator variables $\{\pi_i\}_{i=1}^{m}$ such that $\pi_i = 1$ iff
	the $i$th edge is operational, otherwise $\pi_i = 0$.
	The \emph{reduced (unweighted) incidence} matrix of $\GG$, $\AAA =
	[\aaa_1 \, \aaa_2 \, \cdots \, \aaa_m]$, is
	obtained by removing an arbitrary row from the (unweighted) incidence matrix
	of $\GG$. From Theorem~\ref{th:mt} we know that,
	\begin{align}
	  \Exp_{\pi}\,\Big[ t_w(\GG_\pi) \Big] = 
	  \Exp_{\pi}\,\Big[ \det\Big( \sum_{i=1}^{m} \pi_i w(e_i)\,
	  \aaa_i\aaa_i^\top\Big) \Big].
	  \label{}
	\end{align}
	Define $\AAA_w \triangleq \AAA\sqrt{\mathbf{W}}$ in which $\mathbf{W}
	\triangleq \diag\big(w(e_1) \, w(e_2) \, \cdots \, w(e_m)\big)$. Note that
	this expression is equal to $e(\AAA_w,\AAA_w^\top,\pp)$. From
	Theorem~\ref{th:main} we have,
	\begin{align}
	  \Exp_{\pi}\,\Big[ t_w(\GG_\pi) \Big] &= 
	  \Exp_{\pi}\,\Big[ \det\Big( \sum_{i=1}^{m} \pi_i w(e_i)\,
	  \aaa_i\aaa_i^\top\Big) \Big] \\ 
	  &= e(\AAA_w,\AAA_w^\top,\pp) \\
	  & = 
	  \det\Big( \sum_{i=1}^{m} p_i w(e_i)\,\aaa_i\aaa_i^\top\Big) \\
	  & = \sum_{\mathcal{T} \in \mathbb{T}(\GG)}
		\prod_{e_i \in \Ecal(\mathcal{T})} p_i w(e_i).
	  \label{eq:pq}
	\end{align}
	%\pagebreak
	\begin{remark}
	  It is worth mentioning that, according to above equations, the expected
	  weighted number of spanning trees is given by computing the weighted number
	  of spanning trees after multiplying the edge weights by their probabilities;
	  i.e.,
	  \begin{align}
		\Exp_{\pi}\,\Big[ t_w(\GG_\pi) \Big] &= t_{w_p}(\GG),
	  \end{align}
	  where $w_p : e_i \mapsto p_i w(e_i)$.
	\end{remark}
\end{enumerate}

\section{Random Sum of Rank-$r$ Matrices}
It is not immediately clear whether there is an efficient way for computing
\begin{align}
  \Exp_{\boldsymbol\pi}\,\Big[\det\Big(\sum_{i=1}^{m} \pi_i \UU_i\VV_i^\top
	  \Big)\Big]
  \label{}
\end{align}
in which $\UU_i$ and $\VV_i$ belong to $\Rset^{n \times r_i}$ for $i \in [m]$.
Nevertheless, the following results provide some preliminary insights into this
more general case.
The proofs of the following lemmas follow that of Theorem~\ref{th:main}---i.e.,
Cauchy-Binet formula.
\begin{lemma}
  \begin{align}
	\Exp_{\boldsymbol\pi}\,\Big[\det\Big(\sum_{i=1}^{m} \pi_i \UU_i\VV_i^\top
	\Big)\Big] & \geq \det\Big(\sum_{i=1}^m p_i \UU_i\VV_i^\top \Big).
	\label{eq:lm1}
  \end{align}
\end{lemma}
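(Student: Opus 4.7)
Following the hint after the lemma, my proof would mimic the proof of Theorem~\ref{th:main}, the only new wrinkle being that Cauchy-Binet now selects several columns from within the same ``block'', so each indicator $\pi_i$ appears with multiplicity.

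First, I concatenate: let $R\triangleq\sum_{i=1}^{m}r_i$, $\UU\triangleq[\UU_1\,\,\UU_2\,\cdots\,\UU_m]\in\Rset^{n\times R}$, and define $\VV$ analogously. Let $\tilde{\Pi}$ be the $R\times R$ diagonal matrix whose $i$-th block equals $\pi_i\mathbf{I}_{r_i}$, so that $\sum_{i=1}^m \pi_i\UU_i\VV_i^\top=\UU\tilde{\Pi}\VV^\top$. For $S\in\binom{[R]}{n}$, let $c_i(S)$ denote the number of indices of $S$ falling in block $i$. Applying the Cauchy-Binet formula and using the diagonality of $\tilde\Pi$ gives
\begin{align*}
\det\!\Big(\sum_{i=1}^{m}\pi_i\UU_i\VV_i^\top\Big)=\sum_{S\in\binom{[R]}{n}}\det(\UU_{:,S})\,\det(\VV_{:,S}^\top)\,\prod_{i=1}^{m}\pi_i^{c_i(S)}.
\end{align*}
Since $\pi_i\in\{0,1\}$ forces $\pi_i^{c}=\pi_i$ for every $c\geq 1$, taking expectations and using independence yields a coefficient of $\prod_{i:\,c_i(S)\geq 1}p_i$ on the $S$-th term.

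Second, I apply Cauchy-Binet in the same way to the right-hand side: writing $\sum_i p_i\UU_i\VV_i^\top=\UU\tilde{\PP}\VV^\top$ with $\tilde{\PP}$ block-diagonal and blocks $p_i\mathbf{I}_{r_i}$,
\begin{align*}
\det\!\Big(\sum_{i=1}^{m}p_i\UU_i\VV_i^\top\Big)=\sum_{S\in\binom{[R]}{n}}\det(\UU_{:,S})\,\det(\VV_{:,S}^\top)\,\prod_{i=1}^{m}p_i^{c_i(S)}.
\end{align*}
The claimed inequality \eqref{eq:lm1} then reduces to a termwise comparison of coefficients. Since $p_i\in[0,1]$, we have $p_i^{c_i(S)}\leq p_i$ whenever $c_i(S)\geq 1$, hence $\prod_{i:\,c_i(S)\geq 1}p_i\;\geq\;\prod_{i=1}^{m} p_i^{c_i(S)}$ for every $S$. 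Summing this pointwise inequality over $S$, weighted by the minor product $\det(\UU_{:,S})\det(\VV_{:,S}^\top)$, produces \eqref{eq:lm1}.

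\emph{Where I expect trouble.} The Cauchy-Binet bookkeeping and the collapse $\pi_i^c=\pi_i$ are routine. The delicate step is the last one: passing from a termwise coefficient inequality in the Cauchy-Binet expansion to a bound on the signed sum, which requires the minor products $\det(\UU_{:,S})\det(\VV_{:,S}^\top)$ to share a sign. This is automatic in the symmetric setting $\VV_i=\UU_i$ relevant to the applications of Section~3, where each minor product equals $\det(\UU_{:,S})^2\geq 0$. By contrast, in the rank-one case of Theorem~\ref{th:main} every $c_i(S)$ is either $0$ or $1$, no slack appears between $\pi_i^c$ and $\pi_i$, the identity is exact, and no sign control is needed---so this is precisely the new ingredient that must be controlled in the rank-$r$ extension.
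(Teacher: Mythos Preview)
Your approach is precisely what the paper indicates: it gives no detailed proof of the lemma, only the remark that the argument ``follows that of Theorem~\ref{th:main}---i.e., Cauchy-Binet formula,'' and your concatenation, Cauchy-Binet expansion, collapse $\pi_i^{c}=\pi_i$ for $c\geq 1$, and termwise coefficient comparison $\prod_{i:c_i(S)\geq 1}p_i\geq\prod_i p_i^{c_i(S)}$ are all correct.

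Your caveat about sign control is not a technicality but decisive: without it the lemma, as stated for arbitrary $\UU_i,\VV_i$, is false. Take $n=2$, $m=1$, $r_1=2$, $\UU_1=\mathbf{I}_2$, and any $\VV_1\in\Rset^{2\times 2}$ with $\det\VV_1<0$. Then the left-hand side of \eqref{eq:lm1} equals $p_1\det(\UU_1\VV_1^\top)=p_1\det\VV_1$, the right-hand side equals $p_1^2\det\VV_1$, and $p_1\det\VV_1<p_1^{2}\det\VV_1$ whenever $0<p_1<1$. Hence an additional hypothesis is required---most naturally $\VV_i=\UU_i$, which forces every minor product $\det(\UU_{:,S})\det(\VV_{:,S}^\top)=\det(\UU_{:,S})^2\geq 0$. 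Under that symmetric assumption your termwise argument is a complete and correct proof, and it covers all of the applications in Section~3; the paper has simply omitted the needed sign condition.
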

\begin{lemma}
 Consider a random graph $\GG_\pi$ (over graph $\GG$) whose edge set $\Ecal$ is
 partitioned into $k$ blocks $\{\Ecal_i\}_{i=1}^{k}$. The edges in the $i$th
 block are operational, independent of other blocks, with probability $p_i$.
 Let $\AAA_i$ be the collection of the columns of the reduced weighted 
 incidence matrix that belong to the $i$th block of edges $\Ecal_i$. We have,
	\begin{align}
	  \Exp_{\pi}\,\Big[ t_w(\GG_\pi) \Big] &= 
	  \Exp_{\pi}\,\Big[ \det\Big( \sum_{i=1}^{m} \pi_i \,
	  \AAA_i\AAA_i^\top\Big) \Big] \\ 
	  & = \sum_{\mathcal{T} \in \mathbb{T}(\GG)}
	  \prod_{e_i \in \Ecal(\mathcal{T})} p_{b_i}^{1/n_{b_i}(\mathcal{T})} w(e_i)
	\end{align}
	where $b_i$ is the block index that contains $e_i$  
	and $n_{i}(\mathcal{T}) \triangleq |\Ecal(\mathcal{T})
	\cap \Ecal_{i}|$.
  \label{}
\end{lemma}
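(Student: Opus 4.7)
The plan is to rerun the three-step template used for Theorem~\ref{th:main}, with a small twist to absorb the block structure. Let $\MM \triangleq [\AAA_1 \,\, \AAA_2 \,\, \cdots \,\, \AAA_k]$ denote the full reduced weighted incidence matrix and let $\DD_\pi$ be the diagonal matrix whose $j$th entry equals $\pi_{b_j}$, so that $\sum_i \pi_i \AAA_i \AAA_i^\top = \MM \DD_\pi \MM^\top$. Writing this as $(\MM\sqrt{\DD_\pi})(\MM\sqrt{\DD_\pi})^\top$ and applying Cauchy--Binet, each $n\times n$ minor squared factors as $\prod_{j\in\mathcal{Q}}\pi_{b_j} w(e_j)$ times the squared minor of the \emph{unweighted} reduced incidence matrix. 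The latter is $1$ if $\mathcal{Q}$ indexes a spanning tree and $0$ otherwise (total unimodularity of incidence matrices), so the sum collapses to
\begin{equation*}
\det\bigl(\MM\DD_\pi\MM^\top\bigr) \;=\; \sum_{\mathcal{T}\in\mathbb{T}(\GG)} \prod_{e_j\in\Ecal(\mathcal{T})} \pi_{b_j}\, w(e_j).
\end{equation*}

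Next I would take the expectation term-by-term. Grouping factors by block gives
\begin{equation*}
\prod_{e_j\in\Ecal(\mathcal{T})} \pi_{b_j} \;=\; \prod_{i=1}^{k} \pi_i^{\,n_i(\mathcal{T})},
\end{equation*}
and since $\pi_i \in \{0,1\}$, idempotence yields $\pi_i^{n_i(\mathcal{T})} = \pi_i$ whenever $n_i(\mathcal{T}) \geq 1$. By independence across blocks, $\Exp\bigl[\prod_{e_j\in\Ecal(\mathcal{T})}\pi_{b_j}\bigr] = \prod_{i\,:\,n_i(\mathcal{T})\geq 1} p_i$.

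The last step is purely cosmetic. For every block $i$ that $\mathcal{T}$ touches, split the single surviving factor $p_i$ evenly over the $n_i(\mathcal{T})$ edges of $\mathcal{T}$ in that block, i.e., $p_i = \prod_{e_j \in \Ecal(\mathcal{T}) \cap \Ecal_i} p_i^{1/n_i(\mathcal{T})}$; reassembling over all $i$ produces the claimed form $\prod_{e_j\in\Ecal(\mathcal{T})} p_{b_j}^{1/n_{b_j}(\mathcal{T})}$. I do not anticipate a real obstacle: the only ingredient beyond Theorem~\ref{th:main} is the Bernoulli idempotence $\pi_i^{n}=\pi_i$, which is exactly what prevents the expectation of a within-block product from (incorrectly) becoming $p_i^{n_i(\mathcal{T})}$. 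The unusual-looking exponent $1/n_{b_j}(\mathcal{T})$ is merely a symmetric reattribution of the one surviving $p_i$ back to the $n_i(\mathcal{T})$ edges that produced it.
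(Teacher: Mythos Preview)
Your proposal is correct and follows exactly the route the paper indicates (``follows that of Theorem~\ref{th:main}---i.e., Cauchy--Binet formula''), supplying the details the paper omits. The only genuinely new ingredient beyond the proof of Theorem~\ref{th:main} is the Bernoulli idempotence $\pi_i^{n_i(\mathcal{T})}=\pi_i$, which you identify explicitly, and the cosmetic redistribution of each surviving $p_i$ over the $n_i(\mathcal{T})$ edges to recover the stated exponent $1/n_{b_j}(\mathcal{T})$.
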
 
\pagebreak
\bibliographystyle{../wafr16/paper/splncs_srt}
\bibliography{../rss16-trees/paper/graph,../bib/slam}

\begin{thebibliography}{1}

\bibitem{cohen1986connectivity}
Cohen, J.E.:
\newblock Connectivity of finite anisotropic random graphs and directed graphs.
\newblock In: Mathematical Proceedings of the Cambridge Philosophical Society.
  Volume~99., Cambridge Univ Press (1986)  315--330

\bibitem{joshi2009sensor}
Joshi, S., Boyd, S.:
\newblock Sensor selection via convex optimization.
\newblock Signal Processing, IEEE Transactions on \textbf{57}(2) (2009)
  451--462

\bibitem{kasra16wafr}
Khosoussi, K., Sukhatme, G.S., Huang, S., Dissanayake, G.:
\newblock Designing sparse reliable pose-graph {SLAM}: A graph-theoretic
  approach.
\newblock International Workshop on the Algorithmic Foundations of Robotics
  (2016)

\bibitem{kasraArxiv16}
Khosoussi, K., Sukhatme, G.S., Huang, S., Dissanayake, G.:
\newblock Maximizing the weighted number of spanning trees: Near-$t$-optimal
  graphs.
\newblock arXiv:1604.01116 (2016)

\end{thebibliography}
\end{document}